\tikzstyle{block} = [draw, fill=blue!20, rectangle, minimum height=3em, minimum width=6em]
\tikzstyle{sum} = [draw, fill=blue!20, circle, node distance=1cm]
\tikzstyle{input} = [coordinate]
\tikzstyle{output} = [coordinate]
\tikzstyle{pinstyle} = [pin edge={to-,thin,black}]
\newcommand{\bbr}{\mbox{$\mathbb{R}$}}
\newcommand{\bbz}{\mbox{$\mathbb{Z}$}}
\newcommand{\mcx}{\mbox{$\mathcal X$}}
\newcommand{\mcu}{\mbox{$\mathcal U$}}
\newcommand{\mcr}{\mbox{$\mathcal R$}}
\newcommand{\mcw}{\mbox{$\mathcal W$}}
\newcommand{\mcd}{\mbox{$\mathcal D$}}
\newcommand{\diag}{\mbox{$\mathrm{diag}$}}
\newtheorem{thm}{\noindent{\bf Theorem}}
\newtheorem{lem}{\noindent{\bf Lemma}}
\newtheorem{dfn}{\noindent{\bf Definition}}
\newtheorem{rmk}{\noindent{\bf Remark}}
\title{\LARGE \bf Combining Robust Control and Machine Learning for Uncertain Nonlinear Systems Subject to Persistent Disturbances*}
\author{A. Banderchuk$^{1}$, D. Coutinho$^{2}$ and E. Camponogara$^{2}$
\thanks{© 2023 IEEE. Personal use of this material is permitted. Permission from IEEE must be obtained for all other uses, in any current or future media, including reprinting/republishing this material for advertising or promotional purposes, creating new collective works, for resale or redistribution to servers or lists, or reuse of any copyrighted component of this work in other works.
}
\thanks{*This work was partially supported by CAPES under grants 88887.153840/2017-00 (SIU) and 88887.522001/2020-00 (C-CAIT, SticAmSud), and CNPq under grants 163116/2021-0,  303289/2022-8/PQ and 308624/2021-1/PQ. }
\thanks{$^{1}$A. Banderchuk is with Post-graduate Program in Automation and Systems Engineering, Universidade Federal de Santa Catarina, Florianópolis, SC, 88040-900, Brazil, e-mail: {\tt\small ana.banderchuk@gmail.com}}%
\thanks{$^{2}$D. Coutinho and E. Camponogara are with the Department of Automation and Systems, Universidade Federal de Santa Catarina, Florianópolis, SC, 88040-900, Brazil,
        e-mail: {\tt\small daniel.coutinho(eduardo.camponogara)@ufsc.br}}%
\thanks{DOI: 10.1109/CDC49753.2023.10383966}
}
\begin{document}

\maketitle
\thispagestyle{empty}
\pagestyle{empty}

\begin{abstract}
This paper proposes a control strategy consisting of a robust controller and an Echo State Network (ESN) based control law for stabilizing a class of uncertain nonlinear discrete-time systems subject to persistent disturbances. Firstly, the robust controller is designed to ensure that the closed-loop system is Input-to-State Stable (ISS) with a guaranteed stability region regardless of the ESN control action and exogenous disturbances. Then, the ESN based controller is trained in order to mitigate the effects of disturbances on the system output. A numerical example demonstrates the potentials of the proposed control design method.
\end{abstract}

\section{INTRODUCTION}

{Artificial Intelligence tools have been used to learning and control  of dynamic phenomena that are difficult to model correctly such as  unknown, complex and non-linear systems.} For instance, \cite{ref:Li-Zhou-Luo-2018} {utilizes} a type of recurrent neural network {(RNN)} for redundant manipulators motion control in noisy environments,  {and} \cite{ref:Pan-Wang-2012} {employs} two {recurrent} neural networks for different tasks in a model predictive control approach for unknown nonlinear dynamical systems. {In particular, neural networks have also been used to replace controllers as in \cite{ref:Waegeman-Wyffels-Schrauwen-2012}, which considers a discrete-time RNN for learning a controller based on the inverse model approach applied to 
diverse linear and non-linear plants. 
However, RNNs are in general hard to be trained due to problems related to the back-propagation technique for long term time intervals; see, e.g., \cite{BSF94}.
%
%
Among several types of recurrent neural networks available in specialized literature, 
Echo State Networks (ESNs) seem to be more suitable for real-time implementations \cite{ref:Mahmoud-Elshenawy-2015}, since it employs}
fast linear regression algorithms for training \cite{ref:Jaeger-2001}. {Due to fast learning capabilities,  ESNs have attracted recurrent interest of control practitioners such as the works \cite{ref:Waegeman-Wyffels-Schrauwen-2012} and \cite{ref:Mahmoud-et-al-2021} which consider control structures based on ESNs for different classes of linear and non-linear plants.}

{A major drawback of applying RNNs for modeling and control of dynamics systems is the lack of formal verification tools providing safety and performance guarantees~\cite{BP02}. Recently,
several works have considered robust control theoretic tools for addressing stability issues related to closed-loop systems with the addition of learning feedback components. For instance, \cite{ref:Nguyen-et-al-2021} reformulates the feedback system in terms of a nonlinear differential inclusion in order to obtain stability certificates, \cite{ref:Soloperto-et-al-2018} proposed a learning-based robust model predictive control approach for linear systems with bounded state-dependent uncertainties, 
\cite{ref:Maiworm-et-al-2021} utilized a model predictive control in combination with an online learning of a non-linear Gaussian Process model scheme with guaranteed input-to-state stability, and \cite{ref:Bethge-et-al-2018} proposed a structure where robust feasibility and constraint satisfaction are guaranteed by nominal models, while performance is optimized using learned models.}

{Most of the very recent results applying robust control tools to derive stability and performance guarantees for feedback systems with learning capabilities are based on semi-definite programming (SDP) approaches; see, e.g., \cite{ref:Nguyen-et-al-2021}, \cite{PGBA21}, \cite{PKBKA22}, \cite{zenati2022validation} and
\cite{RWM21} to cite a few. However, when considering ESNs for modeling and control of complex systems, SDP tools 
may lead to a prohibitive computational effort,  since ESN dynamic models typically have a large number of states. In this paper, we propose a different solution to obtain stability guarantees for closed-loop systems with an embedded learning control law to avoid large computations when dealing with large-scale feedback systems. In particular, a two-loop strategy is considered (consisting of a robust controller with an ESN-based outer loop controller) in order to mitigate the effects of persistent disturbances for a class of (possibly open-loop unstable) uncertain nonlinear systems. The robust controller is firstly designed to guarantee that the closed-loop system is input-to-state stable (ISS) regardless of the ESN-based control law and exogenous disturbances, while minimizing the disturbance effects utilizing SDP tools. Then, the ESN-based control law is trained in order to improve the closed-loop performance thanks to the ISS property.   

The remainder of this paper is structured as follows. Section~\ref{sec:ps} describes the problem to be addressed in this paper and Section~\ref{sec:ir} introduces some basic results on ISS and ESNs, which are instrumental to derive the main result of this paper as established in Section~\ref{sec:cd}. Section~\ref{sec:ie} illustrates the application of the proposed results to a simulation example, whereas concluding remarks are drawn in Section~\ref{sec:cr}.}

\vskip 2mm

\noindent {\bf Notation:} $\bbz$ is the set of integers, $\bbz_{\geq}$ is the set of non-negative integers, $\bbr$ is the set of real numbers, $\bbr_{\geq}$ is the set of non-negative real numbers, $\bbr^{n \times m}$ is the set of $n \times m$ real matrices. For a vector sequence $x(k)$, $k=0,1,\dots,\infty$, the one-step ahead time-shift operation is denoted by $x_+ = x(k+1)$, where the argument $k$ of $x(k)$ is often omitted, and $\|x\|_\infty : = \sup_{k \in \bbz_\geq} \|x(k)\|$, where $\|x(k)\| : = \sqrt{x(k)^T x(k)}$. A function $\alpha : \bbr_{\geq} \rightarrow \bbr_{\geq}$ is a class $\mathcal{K}$-function if it is continuous, strictly increasing and $\alpha(0)=0$, and $\beta : \bbr_{\geq} \times \bbr_{\geq} \rightarrow \bbr_{\geq}$ is a class $\mathcal{K}\mathcal{L}$-function if $\beta(s,t)$ is a class $\mathcal{K}$-function on $s$ for each fixed $t \geq0$ and $\beta(s,t)$ is decreasing in $t$ and $\beta(s,t) \rightarrow 0$ as $t \rightarrow \infty$ for each fixed $s \geq0$. 

\section{PROBLEM STATEMENT} \label{sec:ps}

Consider the following class of nonlinear 
discrete-time systems:
\begin{equation} \label{eqn:sys}
\mathcal{G} : 
\left\{
\begin{aligned}
 x_+ & = f(x,\theta,u,d) \\
 y & = C x
\end{aligned}
\right.
\end{equation}
where $x \in \mcx \subset \bbr^{n_x}$ is the state vector, $\theta \in \Theta \subset \bbr^{n_\theta}$ is a vector of time invariant uncertain parameters, $u \in \bbr^{n_u}$ is the controlled input, $d \in \mcd \subset \bbr^{n_d}$ is the disturbance input, $y \in \bbr^{n_y}$ is the controlled output, $f(\cdot)$ is a polynomial vector function of $x$ and linear with respect to $(\theta,u,d)$, and $\mcx$, $\Theta$ and $\mcd$ are compact sets. It is assumed with respect to system~\eqref{eqn:sys} that:

\begin{enumerate}[{A}1)]

\item $f(0,\theta,0,0) = 0$ for all $\theta \in \Theta$.

\item $\mcx$ is a convex set, with $0 \in \mcx$, which can be represented in terms of either the convex hull of its $n_v$ vertices, \textit{i.e.}, 
\begin{equation} \notag
\mcx : = \mathrm{Co}\{v_1, \ldots, v_{n_v}\}, 
\end{equation}
or, alternatively, as the intersection of $n_h$ half-planes
\begin{equation} \notag
\mcx : = \{x \in \bbr^{n_x} : h_i^T x \leq 1, \ i = 1, \ldots, n_h\} 
\end{equation}
with $v_i \in \bbr^{n_x}$, $i=1,\ldots,n_v$, and 
$h_j \in \bbr^{n_x}$, $j~=~1,\ldots,n_h$, defining respectively the vertices and faces of~$\mcx$.

\item $\Theta$ is a polytopic set with known vertices. 

\item $\mcd$ is a magnitude bounded domain defined as
\begin{equation} \label{eqn:mcd}
   \mcd : = \{ d \in \bbr^{n_d} : d^T d \leq 1/(2\eta_d^{2}) \}
\end{equation}
with $\eta_d>0$ defining the size of $\mcd$.
\end{enumerate}

\vskip 2mm

In this paper, we are interested in regulating the controlled output $y$ (around a desired reference $r$) while mitigating the effects of (magnitude bounded) disturbance $d$ considering that the controlled input $u$ is generated by the combination of a robust control law $u_1$ and a correction term $u_2$ (determined by an Echo State Network -- ESN) as illustrated in Fig.~\ref{fig:cs}. 

\begin{figure}[thpb]
\begin{center}
\resizebox{\columnwidth}{!}{%
\begin{tikzpicture}[auto,>=latex]
    \node [input, name=input] {};
    \node [sum, right = of input] (sum) {};
    \node [block, right = of sum] (controller) {$\begin{array}{c} \mbox{Robust} \\ \mbox{Controller} \end{array}$};
    \node [sum, right = of controller] (sum2) {+};
    \node [block, right = of sum2, pin={[pinstyle]above:$d$}, 
            node distance=3cm] (system) {$\begin{array}{c} \mbox{Uncertain} \\ \mbox{Nonlinear System} \end{array}$};
    \node [block, above = of sum2] (controller2) {$\begin{array}{c} \mbox{ESN-based} \\ \mbox{Controller} \end{array}$};
    \draw [->] (sum2) -- node[name=u] {$u$} (system);
    \node [output, right =of system] (output) {};

    \draw [draw,->] (input) -- node[name=r] {$r$} (sum);
    \draw [draw,-] (r) -- ++ (0,1.7cm) -| (controller2.west);
    \draw [->] (sum) -- node {$e$} (controller);
    \draw [->] (controller) -- node {$u_1$} (sum2);
    \draw [->] (controller2) -- node {$u_2$} (sum2);
    \draw [->] (system) -- node [name=y] {$y$}(output);
    \draw [->] (y) -- ++ (0,-2cm) -| node[pos=0.99] {$-$} node [near end] {} (sum);
    \draw [-] (y) -- ++ (0,1.65cm) -| (controller2.east);
\end{tikzpicture}%
}
\caption{Proposed Control Setup.\label{fig:cs}}
\end{center}
\end{figure}
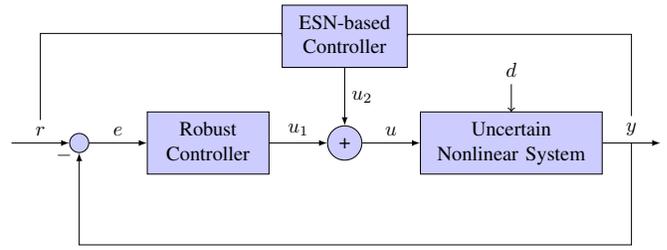

The robust control signal $u_1$ is designed to ensure that the closed-loop system is input-to-state stable regardless the persistent disturbance $d$ and the control signal $u_2$ (assuming that both are magnitude bounded with known limits). The correcting signal $u_2 \in \mcu$ is computed by means of an ESN which is designed to mitigate the effects of $d$ on $y$. It should be noted that this control scheme guarantees that the closed-loop system is ISS. 

For simplicity of presentation, this paper focus on the problem of regulating $y$ around $r\equiv0$ and it is assumed that:
\begin{enumerate}[{A}4)]

\item $u_2$ is constrained to the following set
\begin{equation} \label{eqn:mcu}
\mcu : = \{ u_2 \in \bbr^{n_u} : u_2^T u_2 \leq 1/(2 \eta_u^2) \},    
\end{equation}
with $\eta_u >0$ defining the size of $\mcu$.

\end{enumerate}

\vskip 2mm

In addition, we consider that the robust control signal is a polynomial state feedback control law as given below
\begin{equation} \label{eqn:u1}
   u_1 = K(x) x ,    
\end{equation}
with $K : \bbr^{n_x} \rightarrow \bbr^{n_u} \times \bbr^{n_x}$ being a polynomial matrix function to be determined. 

In view of the above scenario, the problems to be addressed in this paper are as follows:
\begin{itemize}

\item {\bf Robust Control Problem:} design $K(x)$ such that the closed-loop system of \eqref{eqn:sys} with $u=u_1+u_2$ is ISS for all $x(0) \in \mcr \subset \mcx$, $u_2 \in \mcu$ and $d \in \mcd$ while minimizing $\|x\|$, with $\mcr$ representing an over-bounding estimate of the reachable set.

\item {\bf ESN-based Control Problem:} for a given stabilizing robust control law $u_1$, train the ESN such that $\|y\|$ is minimized. 

\end{itemize}

\vskip 2mm

\begin{rmk}
The proposed methodology can be applied to systems whose origin is open-loop unstable, thanks to the robust controller which ensures that the state trajectory is bounded for given scalars $\eta_d$ and $\eta_u$. 
\end{rmk}

\begin{rmk}
We have adopted echo state networks due to their fast training, however alternative recurrent neural networks may be considered in the proposed control setup, possibly Gated Recurrent Units and Encoder-Decoder networks.
\end{rmk}

\vskip 2mm

\section{INSTRUMENTAL RESULTS} \label{sec:ir}

In this section, we recall some fundamental results regarding input-to-state stability and echo state networks which will be instrumental to derive the main result of this paper. 

\subsection{Input-to-State Stability}

Let the following discrete-time system
\begin{equation} \label{eqn:dtns}
    x_+ = f(x,w), \quad x(0) = x_0, \quad x \in \mcx , \quad w \in \mcw,
\end{equation}
where $x \in \mcx$, $w \in \mcw$, $\mcx$ is a compact set containing the origin and
\begin{equation} \label{eqn:mcw}
\mcw : = \{ w \in \mathbb{R}^{n_w} : w^T w \leq 1 \}    
\end{equation}

\begin{dfn}{(Input-to-State Stability~\cite{JW01,SCJ15})} \label{dfn:iss}
The origin of system \eqref{eqn:dtns} is said to be ISS if there exist a $\mathcal{K}$-function $\alpha : \bbr_{\geq} \rightarrow \bbr_{\geq}$, a $\mathcal{K}\mathcal{L}$-function $\beta : \bbr_{\geq} \times \bbr_{\geq} \rightarrow \bbr_{\geq 0 }$ and a positive scalar $\varrho$ such that the following holds:
\begin{equation}
\|x(k,x_0,w)\| \leq \alpha(\|w\|) + \beta(\|x_0\|,k)    
\end{equation}
for all $k \in \bbz_{\geq}$, $\|x_0\|\leq \varrho$ and $\|w\|_\infty \leq \varrho$.
\end{dfn}

\vskip 2mm

The following lemma is a Lyapunov characterization of input-to-state stability according to Definition~\ref{dfn:iss}, which has been adapted from~\cite[Lemma 1]{SCJ15} to our context.

\begin{lem}{(Lyapunov ISS Characterization)} \label{lem-iss-lyap}
The origin of system \eqref{eqn:dtns} is ISS if there exist $V : \mcx \rightarrow \bbr_{\geq}$ and a scalar $\mu \in (0,1)$ such that the following holds
\begin{equation} \label{eqn:lemma-iss-lyap-1}
\Delta V \leq \mu (w^T w - V) , \ \Delta V : = V_+ - V ,     
\end{equation}
for all $x \in \mcx$ and $w \in \mcw$. Moreover, for all $x(0) \in \mcr$ and $w \in \mcw$, the state trajectory $x(k) \in \mcr$ for all $k\geq0$, where 
\begin{equation} \label{eqn:mcr}
\mcr : = \{ x \in \mcx : V \leq 1 \}     
\end{equation}
satisfies $\mcr \subset \mcx$. 
\end{lem}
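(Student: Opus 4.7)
The plan is to rewrite the hypothesis $\Delta V \leq \mu(w^T w - V)$ with $\mu \in (0,1)$ as the contractive recursion $V_+ \leq (1-\mu)V + \mu\, w^T w$, where $0 < 1-\mu < 1$. This single reformulation drives both the positive invariance of $\mcr$ and the ISS estimate. For the invariance claim, I take any $x \in \mcr$ (so $V(x) \leq 1$) and any admissible disturbance $w \in \mcw$ (so $w^T w \leq 1$); the recursion immediately gives $V(x_+) \leq (1-\mu) + \mu = 1$, and since by hypothesis $\mcr \subset \mcx$ the state remains in the domain where the dissipation inequality applies. A routine induction on $k$ then yields $x(k) \in \mcr$ for every $k \geq 0$ whenever $x(0) \in \mcr$ and $\|w\|_\infty \leq 1$.

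For the ISS estimate, I iterate the recursion along a trajectory and sum the resulting geometric series to obtain
\begin{equation*}
V(x(k)) \leq (1-\mu)^k V(x_0) + \mu \sum_{i=0}^{k-1}(1-\mu)^{k-1-i} w(i)^T w(i) \leq (1-\mu)^k V(x_0) + \|w\|_\infty^2,
\end{equation*}
where the last step uses $\mu \sum_{j=0}^{k-1}(1-\mu)^j = 1 - (1-\mu)^k \leq 1$. To translate this $V$-bound into the $\|x\|$-bound required by Definition \ref{dfn:iss}, I invoke continuity and positive-definiteness of $V$ on the compact set $\mcx$, which supply class-$\mathcal{K}$ functions $\underline{\alpha}, \overline{\alpha}$ with $\underline{\alpha}(\|x\|) \leq V(x) \leq \overline{\alpha}(\|x\|)$. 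Splitting the sum using monotonicity of $\underline{\alpha}^{-1}$, I obtain
\begin{equation*}
\|x(k)\| \leq \underline{\alpha}^{-1}\bigl(2(1-\mu)^k \overline{\alpha}(\|x_0\|)\bigr) + \underline{\alpha}^{-1}\bigl(2\|w\|_\infty^2\bigr),
\end{equation*}
which identifies a $\mathcal{KL}$-function $\beta(s,k) := \underline{\alpha}^{-1}(2(1-\mu)^k \overline{\alpha}(s))$ and a $\mathcal{K}$-function $\alpha(s) := \underline{\alpha}^{-1}(2 s^2)$. Choosing $\varrho$ so that $\overline{\alpha}(\varrho) \leq 1$ then guarantees that $\|x_0\| \leq \varrho$ implies $x_0 \in \mcr$, closing the ISS conclusion.

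The main obstacle I foresee is not algebraic but structural: the lemma as written does not explicitly posit a class-$\mathcal{K}$ sandwich on $V$, so the step from the $V$-bound to the $\|x\|$-bound must rely on reading those regularity hypotheses into the statement (as is standard for Lyapunov function candidates) or on invoking \cite[Lemma 1]{SCJ15} directly, where these properties are implicit in the compactness of $\mcx$ and continuity of $V$. Once those bounds are admitted, the remainder of the proof is just the induction for invariance and the geometric-series summation for the ISS estimate.
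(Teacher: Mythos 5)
The paper offers no proof of this lemma at all: it is imported from \cite[Lemma 1]{SCJ15} with the phrase ``adapted \dots to our context,'' so there is nothing internal to compare your argument against. Judged on its own, your proof is the standard and correct one: rewriting the hypothesis as $V_+ \leq (1-\mu)V + \mu\, w^T w$, getting invariance of the unit sublevel set from $(1-\mu)\cdot 1 + \mu\cdot 1 = 1$, and summing the geometric series to get $V(x(k)) \leq (1-\mu)^k V(x_0) + \|w\|_\infty^2$ before converting to an ISS estimate via a class-$\mathcal{K}$ sandwich. Your closing caveat is also the right one to raise: as literally stated the lemma is false without additional regularity on $V$ (e.g., $V\equiv 0$ satisfies \eqref{eqn:lemma-iss-lyap-1} vacuously), so the bounds $\underline{\alpha}(\|x\|)\le V(x)\le\overline{\alpha}(\|x\|)$ must be read into the statement; in the paper's actual use $V=x^TQ^{-1}x$ with $Q>0$, so they hold with quadratic comparison functions. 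One further subtlety you gloss over slightly: showing $V(x_+)\le 1$ places $x_+$ in the sublevel set of $V$, but not automatically in $\mcx$ (where the dissipation inequality is assumed to hold), since $\mcr$ is defined as the \emph{intersection} of that sublevel set with $\mcx$. The induction closes only if the full sublevel set $\{x\in\bbr^{n_x}: V(x)\le 1\}$ is contained in $\mcx$ --- which is exactly what LMI \eqref{eqn:LMI1} arranges in Theorem~1 via $1-h_i^TQh_i>0$ --- so it is worth stating that containment as an explicit hypothesis rather than leaning on the tautological ``$\mcr\subset\mcx$.''
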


\vskip 2mm

\subsection{Echo State Networks}

Echo State Networks (ESNs) are a type of recurrent neural networks with fast learning that consist of an input layer, a recurrent layer with a large number of sparsely connected neurons (the reservoir), and an output layer~\cite{H02}. The connecting weights of the input and reservoir layers are fixed after initialization, and the output weights are easily trainable by means of linear regression problems~\cite{SSCZHL22}. Under some mild assumptions, the internal stability and the echo state property (\textit{i.e.}, the influence of initial conditions progressively vanishes with time) are guaranteed. 
    In other words, the echo state property refers to the ability of an ESN to maintain a stable internal state, regardless of the input it receives, allowing the network to continue processing new inputs without being affected by the previous inputs.

The ESN dynamics can be described in the following discrete-time state space representation:
\begin{equation} \label{eqn:ESN-ds}
\left\{
\begin{aligned}
   \xi_+ & = (1-\gamma) \xi + \gamma f_\xi\big( W_R^R \xi + W_\upsilon^R \upsilon + W_{\mathrm{bias}}^R\big) \\
   \sigma & = W_R^\sigma \xi
\end{aligned}
\right.
\end{equation}
where $\xi \in \bbr^n$ is the ESN state vector which corresponds to the reservoir neurons, $\upsilon \in \bbr^{n_\upsilon}$ is the ESN input, $\sigma \in \bbr^{n_\sigma}$ is the ESN output, $f_\xi : \bbr^n \times \bbr^{n_\upsilon} \times \bbr^n \rightarrow \bbr^n$ is the activation function -- typically element-wise $\tanh(\cdot)$, $\gamma \in (0,1)$ is the leak rate (a low pass filter constant), and $W_{R}^R$, $W_\upsilon^R$ and $W_R^\sigma$ are the reservoir-to-reservoir, input-to-reservoir and reservoir-to-output weight matrices, respectively, while $W_{\mathrm{bias}}^R$ represents a reservoir bias term.    

The connections going to the reservoir are randomly initialized and remain fixed. An ESN is typically  initialized by the following steps:
\begin{enumerate}[1)]
\begin{samepage} 
\item Matrices $\overline{W}_R^R \in \bbr^{n \times n}$, $W_\upsilon^R \in \bbr^{n \times n_\upsilon}$ and $W_{\mathrm{bias}}^R \in \bbr^n$ are randomly generated according to a normal distribution $\mathcal{N}(0,1)$.
\item The matrix ${W}_R^R$ is obtained by re-scaling $\overline{W}_R^R$ such that its spectral radius is smaller than $1$ (to ensure the internal stability). That is:
\begin{equation}
W_R^R = \frac{\rho_R}{\lambda_{\max}} \overline{W}_R^R,
\end{equation}
where $\rho_R \in (0,1)$ is the desired spectral radius and $\lambda_{\max}$ is the largest singular value of $\overline{W}_R^R$.
\item $W_{\upsilon}^R$ and $W_{\mathrm{bias}}^R$ are multiplied by scaling factors $\rho_{\upsilon}$ and $\rho_{\mathrm{bias}}$.
\end{samepage}
\end{enumerate}
These scaling parameters, $\rho_{R}$, $\rho_{\upsilon}$ and $\rho_{\mathrm{bias}}$ are key to the learning performance of the network, having an impact on the nonlinear representation and memory capacity of the reservoir. Also, low values for the leak rate $\gamma$ grant higher memory
capacity in reservoirs, while high values favor quickly varying inputs and outputs.

From a sufficiently large sequence of inputs and outputs (collected from the system), the reservoir-to-output weight $W_R^\sigma$ is trained by solving a least-squares problem.
    To train an ESN, the input data $\upsilon[k]$ is arranged in a matrix $\Upsilon$ and the desired output $\sigma[k]$ in a vector $\Sigma$ over a simulation time period, where each row $\upsilon^T$ of $\Upsilon$ corresponds to a sample time $k$ and its columns are related to the input units. 
For the sake of simplicity, we assume that there are multiple inputs and only one output. 
  The rows of $\Upsilon$ are input into the network reservoir according to each sample time, inducing a state matrix $\Xi$ with the resulting sequence of states in its rows.
According with Ridge Regression, the reservoir-to-output weight $W_R^\sigma$ is a vector obtained by solving the following linear system:
\begin{equation}
   (\Xi^T\Xi -\lambda I)W_R^\sigma = \Xi^T \Sigma  \label{eq:ridge-regres:ESN}
\end{equation}
where $\lambda$ is the Tikhonov regularization parameter, which serves to penalize the weight
magnitude, avoiding overfitting.
  In case of multiple outputs, the weight vector $W_R^{\sigma_i}$ for each output $\sigma_i$ is computed by solving the same equation \eqref{eq:ridge-regres:ESN}, however using the vector $\Sigma_i$ with the desired outputs. 

\vskip 2mm

\section{CONTROL DESIGN} \label{sec:cd}

The main results of this paper are introduced in this section. Firstly, a semi-definite programming approach is devised for designing a stabilizing state-feedback robust controller. Then, 
the ESN-based controller is designed considering the inverse dynamic control strategy. 

\subsection{Robust Control Design}

The system $\mathcal{G}$ in \eqref{eqn:sys} can be cast without loss of generality as follows
\begin{equation} \label{eqn:G2}
\mathcal{G} : \left\{
\begin{aligned}
 x_+ & = A(x,\theta) x + B_u(\theta)(u_1 + u_2) + B_d(\theta) d \\
 y & = C x
\end{aligned}
\right.    
\end{equation}
from the fact that $f(x,\theta,u,d)$ is polynomial w.r.t. (with respect to) $x$ and linear w.r.t. $(\theta,u,d)$, where $A(\cdot) \in \bbr^{n_x \times n_x}$ is a polynomial matrix function of $x$ and affine w.r.t. $\theta$, and $B_u(\cdot) \in \bbr^{n_x \times n_u}$ and $B_d(\cdot) \in \bbr^{n_x \times n_d}$ are affine matrix functions of $\theta$. 

\vskip 1mm

\begin{rmk}
    For simplicity of presentation, the input matrices $B_u$ and $B_d$ of system $\mathcal{G}$ are constrained to be state independent. However, we can easily handle state dependent matrices considering the following matrix decompositions
\begin{align*}
B_u(x,\theta) & = B_{u0}(\theta) + \Pi(x)^T B_{u1} \\
B_d(x,\theta) & = B_{d0}(\theta) + \Pi(x)^T B_{d1} 
\end{align*}    
with the matrices $B_{u0}$, $B_{u1}$, $B_{d0}$ and $B_{d1}$ being affine functions of $\theta$ having appropriate dimensions.     
\end{rmk}

\vskip 1mm

In order to derive a convex solution to the robust control problem, let $q$ be the largest degree of the polynomial entries of $A(x,\theta)$ and the following definitions:
\begin{align}
B_w(\theta) & = \begin{bmatrix} \eta_u B_u(\theta) & \eta_d B_d(\theta) \end{bmatrix}  , \ w = \begin{bmatrix} u_2/\eta_u \\ d/\eta_d  \end{bmatrix} , \notag \\
& \hspace{40mm} n_w = n_u + n_d \label{eqn:Bw} \\[1mm]
A(x,\theta) & = A_0(\theta) + \Pi(x)^T A_1(\theta) , \label{eqn:A0-A1}
\end{align}
where $A_0(\theta) \in \bbr^{n_x \times n_x}$ and $A_1(\theta) \in \bbr^{q n_x \times n_x}$ are affine matrix functions of $\theta$, and
\begin{equation} \label{eqn:Pi}
\Pi(x) = \begin{bmatrix}
    m^{(1)}(x) \otimes I_{n_x} & \cdots & m^{(q)}(x) \otimes I_{n_x}
\end{bmatrix}^T,    
\end{equation}
with $m^{(l)}(x) \in \bbr^{n_l}$, $l = 1,\ldots,q$, representing a vector which entries are all the monomials of degree $l$. 

Notice in view of \eqref{eqn:A0-A1} and \eqref{eqn:Pi} that there exist affine matrix functions $\Omega_0(x) \in \bbr^{n_m n_x \times n_x}$ and $\Omega_1(x) \in \bbr^{n_m n_x \times n_m n_x}$ such that the following hold:
\begin{align}
& \Omega_0(x) + \Omega_1(x) \Pi(x) = 0_{n_m n_x \times n_x} \label{eqn:Omega}\\ 
& \det\{\Omega_1(x)\} = c , \ c \neq 0 , \ \forall \ x \in \bbr^{n_x} \label{eqn:O1}
\end{align}
with $c$ being a constant real scalar and
\begin{equation} \label{eqn:nm}
n_m = n_1 + \cdots + n_q
\end{equation}

In addition, the control gain $K(x)$ of \eqref{eqn:u1} is defined as follows:
\begin{equation} \label{eqn:Kx}
K(x) = K_0 + K_1 \Pi(x)    
\end{equation}
where $\Pi(x)$ is as in \eqref{eqn:Pi}, and $K_0 \in \mathbb{R}^{n_u\times n_x}$ and $K_1 \in \bbr^{n_u \times n_mn_x}$ are matrices to be determined. 

Hence, taking \eqref{eqn:G2}, \eqref{eqn:Bw} and \eqref{eqn:A0-A1} into account, the closed-loop dynamics reads as follows
\begin{multline} \label{eqn:cls}
   x_+ \!=\! \bigg( A_0(\theta) \!+\! \Pi(x)^T A_1(\theta) \!+\! B_u(\theta) \big( K_0 \!+\!  K_1 \Pi(x) \big) \bigg) x \\
   + B_w(\theta) w 
\end{multline}

The following result proposes sufficient conditions for designing $K_0$ and $K_1$ such that the above system is ISS for all $w \in \mathcal{W}$ in terms of a finite set of LMI constraints. 

\vskip 1.5mm

\begin{thm}
Consider the system defined in \eqref{eqn:sys} satisfying (A1)-(A4), its closed-loop dynamics in \eqref{eqn:cls}, with $\Pi(x)$ satisfying \eqref{eqn:Omega} and \eqref{eqn:O1}. Let $\mu \in (0,1)$ be a given real scalar. Suppose there exist $Q=Q^T$, $G$, $M_0$, $M_1$ and $L$ be real matrices, with appropriate dimensions, satisfying the following LMIs:
\begin{equation}
    \begin{bmatrix} 1 & h_i^T Q \\
    Q h_i & Q \end{bmatrix} > 0 , \ i=1,\ldots,n_h \label{eqn:LMI1} \vspace{-2mm}
\end{equation}
\begin{multline}
\hspace{-4mm} 
    \begin{bmatrix}
        (1\!-\!\mu) \big( Q \!-\!G\! - \! G^T\big) & \star & \star & \star & \star \\
        0 & 0 & \star & \star & \star \\ 
        0 & 0 & -\mu I_{n_w} & \star & \star \\
        A_0(\theta)G \! + \! B_u(\theta) M_0 & B_u(\theta) M_1 & B_w(\theta) & - Q & \star \\
        A_1(\theta) G & 0 & 0 & 0 & 0 
    \end{bmatrix} \\
    + L \Omega(x) + \Omega(x)^T L^T < 0, \ \forall \ (x,\theta) \in \mathcal{V} \{ \mathcal{X} \times \Theta \} \label{eqn:LMI2} 
\end{multline}
where $\mathcal{V}\{ \mcx \times \Theta\}$ stands for the set of all vertices of $\mcx \times \Theta$ and
\begin{equation*}
    \Omega(x) = \begin{bmatrix}
\Omega_0(x) & \Omega_1(x) & 0 & 0 & 0 \\
0 & 0 & 0 & \Omega_0(x) & \Omega_1(x)
    \end{bmatrix} .
\end{equation*}
Then, the closed system in \eqref{eqn:cls} with
\begin{equation}
    K_0 \!=\! M_0 G^{-1} , \ K_1 \!=\! M_1 G_a^{-1} , \  
    G_a \!=\! \mathrm{diag}\{G, \ldots, G\} ,
\end{equation}
is locally ISS stable. Moreover, for all $x(0) \in \mcr$ and $w(k) \in \mcw$, the state trajectory $x(k)$ remains in $\mcr$, $\forall \ k > 0$, with
\begin{equation} \label{eqn:mcr2}
    \mcr  = \{ x \in \bbr^{n_x} : x^T Q^{-1} x \leq 1\} .
\end{equation}
\end{thm}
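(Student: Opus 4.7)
The plan is to invoke Lemma~\ref{lem-iss-lyap} with the quadratic Lyapunov candidate $V(x) = x^T P x$, where $P := Q^{-1}$, so that the reachable-set estimate in \eqref{eqn:mcr2} coincides with the sublevel set $\{x : V \leq 1\}$ appearing in \eqref{eqn:mcr}. I would split the task into two independent checks: (i) $\mcr \subset \mcx$, extracted from \eqref{eqn:LMI1}; and (ii) the Lyapunov decrease $\Delta V \leq \mu(w^T w - V)$ holding along the closed-loop \eqref{eqn:cls} for every $x \in \mcx$, $\theta \in \Theta$, and $w \in \mcw$, extracted from \eqref{eqn:LMI2}. Combining these facts with Lemma~\ref{lem-iss-lyap} then directly delivers local ISS and forward invariance of $\mcr$.

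Check (i) is routine: a Schur complement on \eqref{eqn:LMI1} yields $Q > 0$ and $h_i^T Q h_i < 1$ for each $i$; by Cauchy--Schwarz, $(h_i^T x)^2 \leq (h_i^T Q h_i)(x^T P x) < 1$ for every $x \in \mcr$, so $\mcr$ lies inside every half-plane $h_i^T x \leq 1$ defining $\mcx$.

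Check (ii) is the substantive step. First, the left-hand side of \eqref{eqn:LMI2} is affine in $(x,\theta)$, since $A_0, A_1, B_u, B_w$ are affine in $\theta$ and $\Omega(x)$ is affine in $x$, while $G, M_0, M_1, L, Q$ are decision variables; convexity therefore lifts the vertex inequalities to all of $\mcx \times \Theta$. Next, the elementary identity $(G - Q)^T P (G - Q) \succeq 0$ unfolds into $Q - G - G^T \succeq -G^T P G$, so \eqref{eqn:LMI2} a fortiori holds with the $(1,1)$ entry replaced by $-(1-\mu) G^T P G$. I would then apply a Finsler-type argument to this strengthened block inequality: multiply on the right by
\begin{equation*}
\zeta = \bigl(\,G^{-1} x,\ G_a^{-1}\Pi(x)\,x,\ w,\ P x_+,\ \Pi(x)\,P x_+\,\bigr)
\end{equation*}
and on the left by $\zeta^T$, where $x_+$ denotes the closed-loop next state. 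The Finsler contribution $L\Omega(x) + \Omega(x)^T L^T$ drops out because $\Omega(x)\zeta = 0$: the second block-row of $\Omega$ gives $(\Omega_0 + \Omega_1\Pi)Px_+ = 0$ from \eqref{eqn:Omega} directly, while the first block-row reduces to the Kronecker commutation identity $\Pi(x) G = G_a \Pi(x)$, which follows from $\Pi(x) = M(x) \otimes I_{n_x}$ (with $M(x) \in \bbr^{n_m}$ the vector of monomials) and $G_a = I_{n_m} \otimes G$ via the mixed-product rule. Substituting $K_0 = M_0 G^{-1}$, $K_1 = M_1 G_a^{-1}$, and $x_+ = A(x,\theta)x + B_u(\theta)(K_0 x + K_1 \Pi(x) x) + B_w(\theta) w$, the cross-terms telescope to $x_+^T P x_+ - (1-\mu)\,x^T P x - \mu\,w^T w < 0$, which is precisely \eqref{eqn:lemma-iss-lyap-1}.

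The main obstacle is the Finsler/slack-variable construction: recognizing that the nonlinear products $\Pi(x)^T A_1(\theta)\,x$ and $B_u(\theta) K_1 \Pi(x)\,x$ can be \emph{linearized} in the LMI through the slack matrix $G$ together with the auxiliary multiplier block $L\Omega(x)$, and identifying the test vector $\zeta$ simultaneously compatible with the Kronecker structure of $\Pi(x)$ and with the algebraic identity $\Omega_0(x) + \Omega_1(x)\Pi(x) = 0$. Once this is in place, the remaining manipulations are routine but error-prone bookkeeping.
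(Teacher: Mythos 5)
Your proposal is correct and follows essentially the same route as the paper's proof: Schur complement on \eqref{eqn:LMI1} to get $Q>0$ and $\mcr\subset\mcx$, vertex convexity, the completion-of-squares bound $Q-G-G^T\succeq -G^TQ^{-1}G$, annihilation of $L\Omega(x)+\Omega(x)^TL^T$ via the kernel identity \eqref{eqn:Omega} together with the commutation $G_a\Pi(x)=\Pi(x)G$, the substitutions $M_0=K_0G$, $M_1=K_1G_a$, and finally Lemma~\ref{lem-iss-lyap}; the paper merely performs the reductions in stages (congruence with $\Pi_a$, then with $G_d=\mathrm{diag}\{G,I_{n_w},I_{n_x}\}$, then a Schur complement and multiplication by $[x^T\ w^T]$) where you collapse everything into the single test vector $\zeta$. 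The only step you leave implicit is that $G$ is nonsingular (needed for $\zeta$ and for $K_0,K_1$ to be well defined), which the paper extracts from the $(1,1)$ block: $(1-\mu)(Q-G-G^T)<0$ and $Q>0$ force $G+G^T>0$.
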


\begin{proof}
Firstly, note from \eqref{eqn:LMI1} that $Q\!>\!0$ and by applying the Schur's complement that $1 - h_i^T Q h_i  > 0$, for $i=1,\ldots,n_h$, which respectively imply that $V(x) = x^T Q^{-1} x > 0$, $\forall \ x\neq0$, and $\mcr \subset \mcx$, with $\mcr$ as defined in \eqref{eqn:mcr2}; see, e.g., \cite{BEFB}.

Next, considering the set of LMIs in \eqref{eqn:LMI2}, it follows that \eqref{eqn:LMI2} holds for all $(x,\theta) \in \mcx \times \Theta$ from convexity arguments. Hence, pre- and post-multiplying the matrix inequality in \eqref{eqn:LMI2} by $\Pi_a^T$ and $\Pi_a$, respectively, and from the fact that $\Omega(x) \Pi_a = 0$, yields 
\begin{equation} \label{eqn:proof1}
\Phi(x,\theta) \!<\! 0 , \ \forall \
(x,\theta) \in \mcx \times \Theta ,
\end{equation}
where
\begin{align} 
\Phi(x,\theta) & \!=\!    
\begin{bmatrix}
       (1-\mu)(Q-G-G^T) & \star & \star \\
       0 & -\mu I_{n_w} & 0 \\
       \left( \begin{array}{c} 
       \big(A_0(\theta)\!+\!\Pi^T A_1(\theta)\big)G \!+\! \\ B_u(\theta)\big(M_0\!+\!M_1 \Pi(x)\big) \end{array} \right) & B_w(\theta) & -Q
   \end{bmatrix}  \nonumber \\[1mm]
    \Pi_a & = \begin{bmatrix}
        I_{n_x} & 0 & 0 \\ \Pi(x) & 0 & 0 \\
        0 & I_{n_w} & 0 \\ 0 & 0 & I_{n_x} \\ 0 & 0 & \Pi(x) 
    \end{bmatrix} . \label{eqn:proof2}   
\end{align}

Notice from the block $(1,1)$ of $\Phi(x,\theta)$
that $Q < G + G^T$, which implies that $G$ is full rank since $Q> 0 $. Now, let the following parameterizations $M_0 = K_0 G$ and $M_1 = K_1 G_a$. Hence, by noting that $G_a \Pi(x) = \Pi(x) G$ and the fact that 
\begin{equation*}
    - G^T Q^{-1} G \leq Q - G - G^T ,
\end{equation*}
it holds from \eqref{eqn:proof1} that
\begin{equation} \label{eqn:proof3}
G_d^T \Psi(x,\theta) G_d < 0 , \ \forall \
(x,\theta) \in \mcx \times \Theta 
\end{equation}
with $G_d = \diag\{G,I_{n_w},I_{n_x}\}$ and
\begin{equation*}
\Psi(x,\theta) \!=\! 
\begin{bmatrix}
       (1-\mu)Q^{-1} & \star & \star \\
       0 & -\mu I_{n_w} & 0 \\
       \left( \begin{array}{c} 
       A_0(\theta)\!+\!\Pi^T A_1(\theta)\!+\! \\ B_u(\theta)\big(K_0\!+\!K_1 \Pi(x)\big) \end{array} \right) & B_w(\theta) & - Q
   \end{bmatrix}    
\end{equation*}

Then, by applying the Schur's complement to $\Psi(x,\theta)<0$ and by pre- and post-multiplying the resulting matrix inequality by 
$\begin{bmatrix} x^T & w^T \end{bmatrix}$ and its transpose, respectively, leads to
\begin{align}
    x_+^T Q^{-1} x_+ - (1-\mu) x^T Q^{-1} x - \mu w^T w < 0 \  \Rightarrow \nonumber \\
 \Delta V \leq \mu \big( w^T w - V\big) < 0 , \ \forall \
(x,\theta) \in \mcx \times \Theta, 
\end{align}
where $V = x^T Q^{-1} x$. The rest of this proof follows from Lemma~\ref{lem-iss-lyap}.
\end{proof}

\vskip 1mm

The robust control law $u_1 = K(x)x$ can be synthesized in order to minimize the effects of $w(k) \in \mcw$ on $x(k)$ by considering the following optimization problem
\begin{equation} \label{eqn:op}
\min_{\lambda,Q,\ldots,L,\mu} \ \lambda : 
        \begin{cases} 
        \lambda I_{n_x} - Q \geq 0, \\
        \mbox{ \eqref{eqn:LMI1} and \eqref{eqn:LMI2},}
        \end{cases}
\end{equation}
which minimizes the largest eigenvalue of $Q$. Notice that the above optimization problem implies that $\|x\|_\infty \leq \sqrt{\lambda}$.

\vskip 1mm


\vskip 1mm
\subsection{ESN Learning and Testing}

Several standard control algorithms use prior knowledge about a system to accomplish the desired behavior, such as the robust controller presented above. However, complex nonlinear systems may not be fully known or modeled correctly, prompting the application of a learning approach such as neural networks.
   Here, we consider a learning control based on the inverse model of the plant, which takes the desired output of a system and calculates the input that is needed to achieve that output. In other words, it ``inverts'' the relationship between inputs and outputs in the system. 
   
An ESN is proposed as the inverse model for its desirable features \cite{WwS12}. Being a type of recurrent neural network, the ESN has an internal memory that allows it to maintain a context or state across time steps, enabling it to better handle temporal dependencies. ESNs are also easy and fast to train.

  To learn the inverse model of the plant, this work considers the set-up shown in Fig.~\ref{fig:inverse-model:set-up}.
     The ESN input $\upsilon=(\tilde{y},\tilde{u}_1,\tilde{u}_2)$ collects past plant outputs $y[\cdot]$, and control signals $u_1[\cdot]$ and $u_2[\cdot]$, spaced in time according to a delay parameter $\delta\in\mathbb{N}$.
   Namely, $\tilde{y} = (y[k], y[k-\delta], ... , y[k-m\delta])$ consists of the current and past system outputs, $\tilde{u}_1 = (u_1[k-\delta], ..., u_1[k-m\delta])$ and $\tilde{u}_2 = (u_2[k-2\delta], ..., u_2[k-(m+1)\delta])$ collect $m$ previous control inputs.
Given the network input $\upsilon[k]$ at time $k$,  the goal of the inverse model is to learn the plant control input $u_2[k-\delta]$, at time $k-\delta$, that drove the system to the current output $y[k]$.
   The number of past outputs $m$ and the delay $\delta$ are hyper-parameters to be properly tuned.

    By simulating the plant with $u_1$ being the signal from the robust controller and $u_2$ being a random signal within the stability bounds, we obtain a matrix $\Xi$ with the trajectory of reservoir states and a vector $\Sigma$ with the desired outputs.  
Then ridge regression is applied to train the inverse model using Eq.~\eqref{eq:ridge-regres:ESN}, obtaining the reservoir-to-output matrix $W_{R}^{\sigma}$.

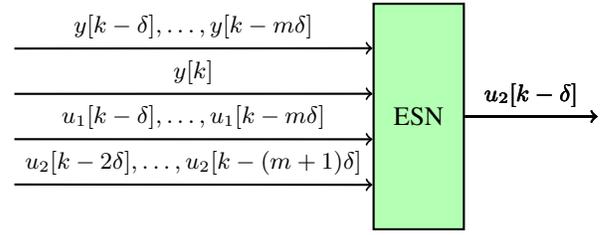
\begin{figure}[htbp]

    \tikzset{myblock/.style = {rectangle, draw}  }
  \begin{center}
  \begin{adjustbox}{max width=\textwidth} 
  \begin{tikzpicture}
    \node (in)[myblock,minimum height=3cm,draw=none] {};
    \node (esn)[myblock,right of=in,xshift=4.5cm,thick,
                minimum height=3cm,minimum width=1.2cm,fill=green!30]{ESN};
    \node (out)[myblock,right of=esn,xshift=1.5cm,minimum height=3cm,draw=none] {};
    \foreach \a/\b in {0.20/{$y[k-\delta],\dots,y[k-m\delta]$},
                       0.4/{$y[k]$},
                       0.6/{$u_1[k-\delta],\dots,u_1[k-m\delta]$},
                       0.8/{$u_2[k-2\delta],\dots,u_2[k-(m+1)\delta]$}}{
    \draw[->,thick]
              ($(in.north east)!\a!(in.south east)$) -- node[above] {\small \b} 
                  ($(esn.north west)!\a!(esn.south west)$);
    \draw[->,thick] (esn.east) -- node[above] {\small $u_2[k-\delta]$} (out.west);
   }
\end{tikzpicture}
\end{adjustbox} 
\end{center}

   \caption{General set-up for learning the inverse model.\label{fig:inverse-model:set-up}}
\end{figure}

Once the reservoir-to-output matrix $W_{R}^{\sigma}$ is learned for the inverse model, the resulting ESN can be used to generate the control signal $u_2$ that improves performance. This is achieved by {time shifting $\delta$ steps in} all signals of the learning set-up of Fig.~\ref{fig:inverse-model:set-up}, which leads to the set-up for control with the inverse model as shown in Fig.~\ref{fig:inverse-model:control}. At time $k$, by  setting $y[k+\delta]$ equal to the desired reference at time $k+\delta$ {(i.e., $r[k+\delta]$\footnote{Recall that in this paper, we are only considering the regulation problem around $y=0$ and thus $r[k+\delta]=0$.}),} the ESN will output the control signal $\overline{u}_2[k]$ that should be input at the current time to drive the plant to the desired reference.

\begin{figure}[htbp]
  
    \tikzset{myblock/.style = {rectangle, draw}  }
  \begin{center}
  \begin{tikzpicture}
    \node (in)[myblock,minimum height=3cm,draw=none] {};
    \node (esn)[myblock,right of=in,xshift=4cm,thick,
                minimum height=3cm,minimum width=1.2cm,fill=green!30]{ESN};
    \node (out)[myblock,right of=esn,xshift=1cm,minimum height=3cm,draw=none] {};
    \foreach \a/\b in {0.20/{$y[k],\dots,y[k-(m-1)\delta]$},
                       0.4/{$y[k+\delta]=r[k+\delta]$},
                       0.6/{$u_1[k],\dots,u_1[k-(m-1)\delta]$},
                       0.8/{$u_2[k-\delta],\dots,u_2[k-m\delta]$}}{
    \draw[->,thick]
              ($(in.north east)!\a!(in.south east)$) -- node[above] {\small \b} 
                  ($(esn.north west)!\a!(esn.south west)$);
    \draw[->,thick] (esn.east) -- node[above] {\small $\overline{u}_2[k]$} (out.west);
   }
\end{tikzpicture}
\end{center}
  \caption{General set-up for inverse model control.\label{fig:inverse-model:control}}
\end{figure}
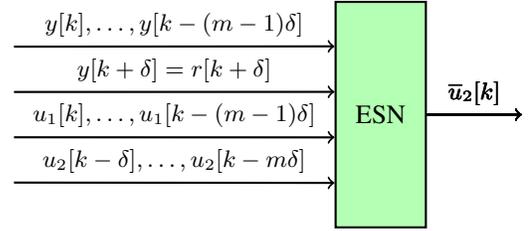
{Before feeding the ESN control law to the control system, $u_2$ is activated by the following function
\begin{equation*}
    u_2 = \frac{1}{\sqrt{2} \ \! \eta_u} \tanh{(\sqrt{2} \ \! \eta_u\overline{u}_2)}
\end{equation*}
ensuring that $u_2 \in \mcu$.
}

\section{ILLUSTRATIVE EXAMPLE} \label{sec:ie}


{Consider the following system which consists of the Van der Pol equation with the addition of an integrator:}
\begin{equation} \label{eqn:vdp-ct}
\left\{
\begin{aligned}
 \dot x_1(t) & \!= x_2(t) \\
 \dot x_2(t) & \!= - x_1(t) \!+\! \theta \big(1 \!-\! x_1(t)^2\big) x_2(t) \!+\! u(t) \!+\! d(t) \\
 \dot x_3(t) & \!= {x_1(t)} \\
 y(t) & \! = x_1(t)
\end{aligned}
\right.
\end{equation}
{where $x := \begin{bmatrix} x_1 & x_2 & x_3 \end{bmatrix}^T \in \mcx$ is the state vector, $\theta \in \Theta \subset \bbr$ is an uncertain parameter, $u \in \bbr$ is the control input, $d \in \mcd \subset \bbr$ is an exogenous disturbance, and $y \in \bbr$ is the controlled output.}  

{The control objective is to regulate $y(t)$ around zero considering the following sampled control law:
\begin{equation} \label{eqn:u-sh}
u(t) = u[kT_s] , \ \forall \ t \in [kT_s,(k\!+\!1)T_s), \ k \geq 0 ,
\end{equation}
where $T_s$ is a sufficiently small sampling period. In addition, the discrete-time control signal $u[k T_s] \!=\! u[k] \!=\! u$ will be determined utilizing the control strategy described in Section~\ref{sec:cd}. To this end, by applying the Euler's forward approximation, the quasi-linear discrete-time representation of \eqref{eqn:vdp-ct} as given in \eqref{eqn:G2} is defined by the following matrices:}
\begin{equation*}
\begin{aligned}
{A(x, \theta)} &= { 
\left[\begin{matrix} 1 & T_s  & 0 \\[1mm]
- T_s & 1 \!-\! T_s \theta \left(x_{1}^{2} \!-\! 1 \right)  & 0 \\[1mm]
T_s & 0 & 1
\end{matrix}\right]},  \\[1mm]
B_u(\theta) \!=\! B_u &= \left[\begin{matrix}0\\T_s\\0\end{matrix}\right] \; \text{ and } \; B_w(\theta) \!=\! B_w = \left[\begin{matrix}0 & 0\\T_s & T_s\\0 & 0\end{matrix}\right] .
\end{aligned}
\end{equation*}

{In this example, it will be assumed that:
\begin{itemize}

\item $\mcx = \{ x \in \bbr^3: |x_1| \leq 2 \}$; 

\item $\Theta = \{ \theta \in \bbr: 0.5 \leq \theta \leq 0.9 \}$;

\item $\mcd = \{ d \in\bbr: |d | \leq 1 / \sqrt{2} \}$; and 
\item $\mcu = \{ u_2 \in \bbr: | u_2 | \leq 1 / \sqrt{2} \}$.

\end{itemize}
Notice, in this particular example, that $\mcx$ is only bounded on the $x_1$ direction, since $A(x,\theta)$ is only a function of $x_1$ and $\theta$.}

{In order to design the robust controller, the system matrix is cast as follows 
\begin{equation*}
\left\{
\begin{aligned}
A(x,\theta) & = A_0(\theta) + \Pi(x)^T A_1(\theta) \\ 
0_{6 \times 3} & = \Omega_0(x) + \Omega_1(x) \Pi(x)  
\end{aligned} \right.
\end{equation*}
where}
\begin{equation*}
\begin{aligned}
&A_0(\theta) = \left[\begin{matrix}1 & T_{s} & 0\\- T_{s} & T_{s} \theta + 1 & 0\\T_{s} & 0 & 1\end{matrix}\right], A_1(\theta) = \left[\begin{matrix}0 & 0 & 0\\0 & 0 & 0\\0 & 0 & 0\\0 & 0 & 0\\0 & - T_{s} \theta & 0\\0 & 0 & 0\end{matrix}\right] \\
&\Pi(x) = \left[\begin{matrix}x_{1}\otimes I_3 \\ x_{1}^{2}\otimes I_3 \end{matrix}\right], \;\; \Omega_0(x) = \left[\begin{matrix} x_{1} I_3 \\ 0_3\end{matrix}\right] \;  \text{ and } \\
&\Omega_1(x) = \left[\begin{matrix} -I_3 & 0_3 \\ x_{1} I_3  & -I_3 \\\end{matrix}\right] .
\end{aligned}
\end{equation*}

{Hence, by considering $T_s = 0.1$ s {and $\theta=0.75$}, 
the optimization problem \eqref{eqn:op} is solved by applying a line search over $\mu \in (0,1)$ 
leading to the robust control law 
\begin{equation*}
    u = (K_0 + K_1\Pi(x))x = K(x)x
\end{equation*}
for an optimal $\mu = 0.3$, where}
\begin{equation*}
K(x)^T = \left[\begin{matrix}- 0.0217 x_{1}^{2} - 3.514 \cdot 10^{-14} x_{1} - 68.42\\0.7203 x_{1}^{2} + 1.058 \cdot 10^{-14} x_{1} - 16.73\\- 0.01905 x_{1}^{2} - 1.001 \cdot 10^{-13} x_{1} - 90.99\end{matrix}\right] ,
\end{equation*}
and the following reachable set estimate 
\begin{multline*}
\mathcal{R} = \{ x \in \bbr^3 : 2.2 x_{1}^{2} + 0.36 x_{1} x_{2} + 7.1 x_{1} x_{3} \\ + 0.03 x_{2}^{2} + 0.54 x_{2} x_{3} + 8.2 x_{3}^{2} \leq 10^{-3} \} .   
\end{multline*}

Then, for training the ESN-based controller, a data-set with 5000 samples was constructed based on simulations of the closed-loop system (i.e., with $u[k] = u_1[k]$) by considering that $w$ and $u_2$ were filtered white noises with cutoff frequencies defined to take into account typical low frequency disturbances. In addition, the following was assumed for the ESN hyper-parameters: $(i)$ a spectral radius $\rho = 0.5$; $(ii)$ reservoir size $N = 200$; $(iii)$ leaking rate $\gamma = 0.6$; and $(iv)$ reservoir density of $0.9$. These parameters were chosen considering a grid search and the neural network training error as the performance index. The number of past outputs ($m=1$) and the delay samples ($\delta=2$) were tuned based on the root mean square (RMS) 
value of the system output $y$ 
when adding the ESN control action (i.e., $u = u_1 + u_2$) relative to the robust control action only (i.e., $u = u_1$).

\vskip 2mm

The performance of the proposed control strategy is evaluated in the sequel by means of numerical simulations considering the plant continuous-time model and the sampled control law as defined in \eqref{eqn:vdp-ct} and \eqref{eqn:u-sh}, respectively. In particular, Fig.~\ref{fig:example:yw}
shows the disturbance attenuation properties of the proposed controller (i.e., $u = u_1+u_2$) compared to the robust control action only (i.e., $u = u_1$). 
%
The results show that the combination of robust control actions and ESN outperformed the performance obtained only with the robust controller by an improvement factor of $54.36\%$, which was calculated based on the RMS value of system output $y$ when adding the ESN control action to the robust control law relative to the robust controller only (which was the same metric utilized to determine $\delta$ and $m$ hyper-parameters for training the ESN-based controller). Furthermore, Fig.~\ref{fig:example:x1x2x3}  shows the phase portrait of the state trajectories (for both control laws) and the estimate $\mcr$ of the closed-loop reachable set. It is worth to mention that, as expected, the state trajectories remain confined to the set $\mathcal{ R}$ for all $t\geq0$.
\begin{figure}[htbp]
   \vspace{3mm}
   \begin{center}   
    \includegraphics[width=\columnwidth]{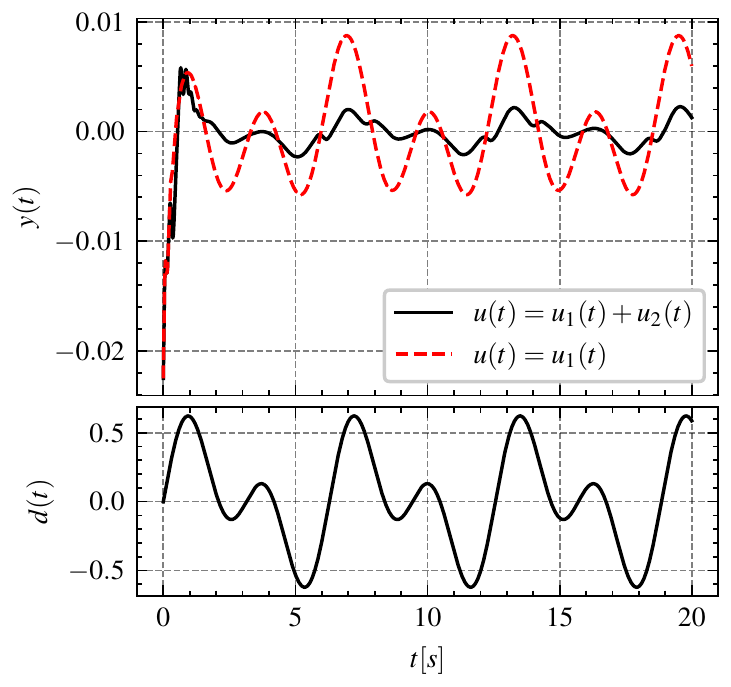}
   \end{center}
   \caption{System closed-loop response (at the top of figure) considering $u=u_1$ (red dashed line) and $u=u_1+u_2$ (black solid line) for $x_0^T = [ -0.0225 \;\;0.252 \;\; 0.005]$, {$\theta=0.75$} and the disturbance signal (depicted at the bottom of the figure) defined as $d(t) = 0.25\sqrt{2}(\sin(t) + \sin(2t))$. \label{fig:example:yw}}
\end{figure}
\begin{figure}[htbp]
   \begin{center}
     \includegraphics[]{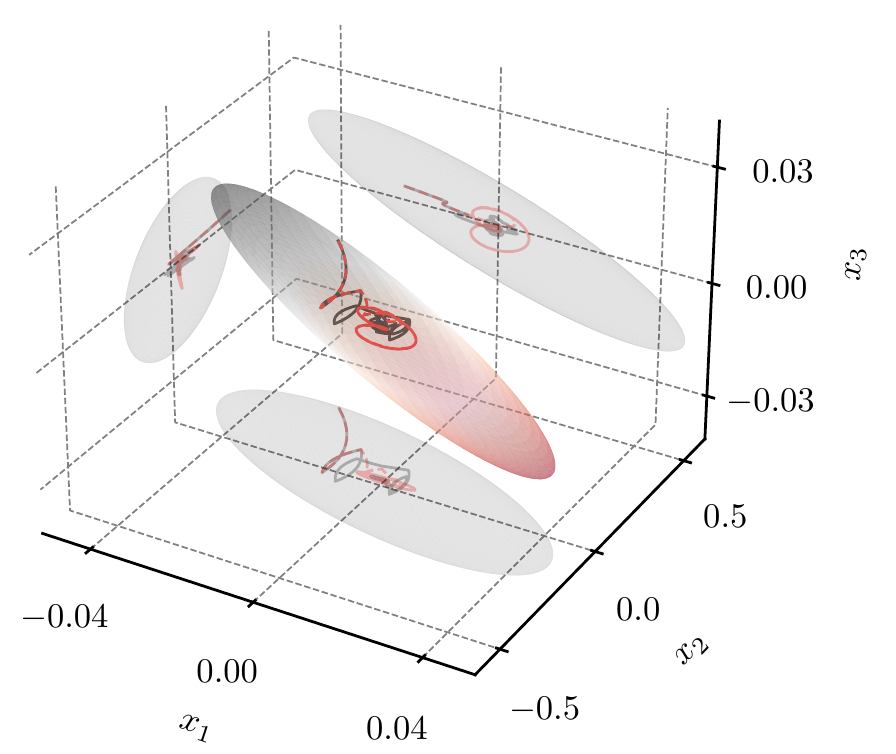}
   \end{center}
   \caption{Reachable set estimate and phase portrait of state trajectories for $x_0^T = [ -0.0225 \;\;0.252 \;\; 0.005]$,  {$\theta=0.75$} and $d(t) = 0.25\sqrt{2}(\sin(t) + \sin(2t))$, considering the control laws $u = u_1$ (state trajectory in red solid line) and $u = u_1 + u_2$ (state trajectory in black solid line).
   \label{fig:example:x1x2x3}}
\end{figure}


\section{CONCLUDING REMARKS} \label{sec:cr}

{This paper has proposed a robust control strategy with learning capabilities (based on ESNs) for stabilizing a class of uncertain polynomial discrete-time systems subject to unknown magnitude bounded disturbances. Firstly, a nonlinear state feedback is designed to ensure that the state trajectory driven by nonzero initial conditions and persistent disturbances is bounded to a positively invariant set regardless of the ESN control law (assuming a bounded action). Secondly, the ESN-based controller is trained to mitigate the effects of disturbances on the system output. A numerical example demonstrates the effectiveness of the proposed control technique. Future research will be concentrated on devising a robust controller with online learning capabilities.}

\bibliography{main}









\end{document}